\documentclass[fleqn,10pt,twocolumn]{SICE_FES26}

\usepackage{amsmath,amssymb}
\usepackage{graphicx}
\usepackage{booktabs}
\usepackage{enumitem}
\usepackage{kotex}
\usepackage{array}
\usepackage{algorithm}
\usepackage{algpseudocode}
\usepackage{multirow}
\usepackage{color}
\usepackage{booktabs}
\usepackage{url}
\usepackage{siunitx}
\usepackage{tikz}
\usepackage{pgfplots}
\usepackage{grffile}
\pgfplotsset{compat=newest}
\usetikzlibrary{plotmarks}
\usetikzlibrary{arrows.meta}
\usepgfplotslibrary{patchplots}
\usepgfplotslibrary{groupplots}
\usetikzlibrary{shapes,arrows}
\tikzstyle{block} = [draw, fill=white, rectangle, minimum height=2em, minimum width=3em]
\tikzstyle{sum} = [draw, fill=white, circle, node distance=1cm]
\usetikzlibrary{decorations.pathmorphing}
\usetikzlibrary{calc}
\usetikzlibrary{arrows.meta,positioning}
\usetikzlibrary{calc}
\tikzset{
  block/.style = {draw, fill=white, rectangle, minimum height=2em, minimum width=4em, align=center},
  line/.style  = {-{Stealth[length=2.2mm]}, thick},
  plain/.style = {thick}
}
\usetikzlibrary{fit,backgrounds}

\newtheorem{assumption}{\bf Assumption}
\newtheorem{proposition}{\bf Proposition}

\newtheorem{lemma}{\bf Lemma}

\newcommand{\bbR}{\ensuremath{{\mathbb R}}}
\newcommand{\bbZ}{\ensuremath{{\mathbb Z}}}
\newcommand{\bbN}{\ensuremath{{\mathbb N}}}

\newcommand{\calC}{{\mathcal{C}}}

\newcommand{\calO}{{\mathcal{O}}}

\newcommand{\calK}{{\mathcal{K}}}

\newcommand{\sfr}{{\mathsf{r}}}
\newcommand{\sfs}{{\mathsf{s}}}

\newcommand{\bfx}{{\mathbf{x}}}
\newcommand{\bfK}{{\mathbf{K}}}

\newcommand{\bfu}{{\mathbf{u}}}

\newcommand{\Enc}{{\mathsf{Enc}}}
\newcommand{\Dec}{{\mathsf{Dec}}}
\newcommand{\Comm}{{\mathsf{Comm}}}
\newcommand{\sk}{{\mathsf{sk}}}
\newcommand{\ini}{{\mathsf{ini}}}
\newcommand{\modp}{~{\mathrm{mod}}~}

\title{Co-Design of Cryptographic Parameters and Delay-Aware Feedback Gain for Encrypted Control Systems}

\author{Yeongjun Jang${}^{1\dagger}$}
\speaker{Yeongjun Jang}

\affils{${}^{1}$ASRI, Department of Electrical and Computer Engineering, Seoul National University, Seoul, Korea\\
(Tel: +82-2-880-9053; E-mail: jangyj0512@snu.ac.kr)\\
}

\abstract{%
Encrypted control employs homomorphic encryption (HE) to protect both the computation and communication stages, making it a promising approach for secure networked control systems. 
Most existing results pre-design a controller in the plaintext domain and then implement it over encrypted data. 
However, this can be problematic because HE induces non-negligible communication and computation delays that typically increase with the security level, potentially degrading control performance and even destabilizing the closed-loop system. 
To address this issue, we propose a co-design framework for cryptographic parameters and delay-aware feedback gain. 
We first characterize an upper bound of the encryption-induced delay as a function of the cryptographic parameters. 
Then, for a given set of cryptographic parameters and feedback gain, we derive a sufficient condition under which the closed-loop system remains stable for all admissible delays, expressed as a finite set of linear matrix inequalities.
This leads to a tractable outer-inner design procedure: the outer loop selects cryptographic parameters satisfying the desired security level, while the inner loop seeks a stabilizing delay-aware feedback gain.
}

\keywords{%
Encrypted control, homomorphic encryption, security, privacy
}

\begin{document}

\maketitle

%-----------------------------------------------------------------------
\section{Introduction}

Networked control systems allow resource-limited devices to efficiently outsource computationally intensive tasks to remote servers \cite{ZhanHanq19}. 
However, this architecture requires the transmission of data that may contain sensitive information, giving rise to privacy concerns \cite{AnanNguy25}. 
In particular, communication channels are vulnerable to eavesdropping, and the remote servers may be semi-honest, i.e., they may correctly execute the prescribed protocol while attempting to infer private information from the received data. 
The leaked information can be exploited to design cyberattacks capable of causing significant physical damage, as demonstrated by the incidents in \cite{HemsFish18,Slow19}. 

Recently, encrypted control has emerged as a promising framework for enhancing the security of networked control systems \cite{KogiFuji15,KimjKimd22,SchlBinf23}.
By employing homomorphic encryption (HE), which enables arithmetic operations to be performed directly on encrypted data without decryption, it protects both the communication and computation stages. 
More specifically, sensor measurements are
encrypted and transmitted to an untrusted remote server, where the control input is computed over encrypted data and sent back to the actuator for decryption (see Fig.~\ref{fig:diagram}). 

Most related studies first design a controller in the plaintext domain and then integrate HE, while preserving the performance of the unencrypted controller \cite{KimjShim23,TeraSada24,SchlAllg24}. 
However, they often neglect the effects of \textit{encryption-induced delays}. 
Such delays arise from, for example, the transmission of ciphertexts whose sizes are typically much larger than that of plaintexts, as well as the computational burden of encryption, homomorphic evaluation, and decryption. 
These may cause significant performance degradation and even destabilize the closed-loop system \cite{Cloo08}.
Although \cite{TeraSada23} presented a method for jointly designing the key length of a cryptosystem and control laws achieving a desired security level and control performance, it has not accounted for encryption-induced delays.

\begin{figure}[t]
\centering
\begin{tikzpicture}[font=\small]

% =========================================================
% Top row
% =========================================================
\node[block] (q1) {ZOH};
\node[block, right=1cm of q1] (plant) {Plant};
\node[block, right=1cm of plant, minimum width=4.5em, minimum height=2.6em] (q2) {};

% internal switch drawing
\draw ($(q2.west)+(0.22cm,0.1cm)$) -- ($(q2.center)+(-0.18cm,0.1cm)$);
\draw ($(q2.center)+(-0.05cm,0.3cm)$) -- ($(q2.center)+(0.22cm,0.1cm)$);
\fill ($(q2.center)+(-0.05cm,0.3cm)$) circle (1pt);
\fill ($(q2.center)+(-0.18cm,0.1cm)$) circle (1pt);
\draw ($(q2.center)+(0.22cm,0.1cm)$) -- ($(q2.east)+(-0.22cm,0.1cm)$);

\node at ($(q2.center)+(0,-0.2cm)$) {Sample};
\node at ($(q2.center)+(0.52cm,0.26cm)$) {\scriptsize $T_s$};
% =========================================================
% Middle row
% =========================================================
\node[block, below=0.7cm of q1] (scale) {Scale};
\node[block, below=0.7cm of q2] (quant) {Quantize};

\node[block, below=0.7cm of scale] (dec) {$\Dec$};
\node[block, below=0.7cm of quant] (enc) {$\Enc$};

% =========================================================
% Controller block
% Reduced vertical distance from plant
% =========================================================
\node[block, below=4.3cm of plant, minimum width=5.5em] (ctrl) {Encrypted\\Controller};

% =========================================================
% Top chain
% =========================================================
\draw[line] (q1) -- node[midway, above] {$u(t)$} (plant);
\draw[line] (plant) -- node[midway, above] {$x(t)$} (q2);

% Left branch
\draw[line, dashed] (dec) -- node[midway, right] {$\Dec(\bfu[k])$} (scale);
\draw[line, dashed] (scale) -- node[midway, right] {$u_e[k]$} (q1);

% Right branch
\draw[line, dashed] (q2) -- node[midway, right] {$x[k]$} (quant);
\draw[line, dashed] (quant) -- node[midway, right] {$\left\lceil \frac{x[k]}{\sfr} \right\rfloor$} (enc);

\draw[line, dashed] (enc) |- 
    node[midway, above, xshift=-6mm] {$\bfx[k]$}
    (ctrl);

% =========================================================
% Controller interconnection
% =========================================================
\draw[line, dashed] (ctrl.west) -| 
    node[midway, right, yshift=8pt, xshift=9pt] {$\bfu[k]$}
    (dec.south);

% =========================================================
% Background group boxes
% =========================================================
\begin{pgfonlayer}{background}

% Top group
\node[
    draw,
    dotted,
    thick,
    rounded corners,
    fit=(q1)(plant)(q2)(dec)(enc),
    inner sep=8pt
] (boxTop) {};

% Bottom group
\node[inner sep=0pt, fit=(ctrl)] (boxRestFit) {};

% Make network box horizontally match the plant-side box
\coordinate (cyberSW) at ($(boxTop.south west |- boxRestFit.south west)+(0pt,-12pt)$);
\coordinate (cyberNE) at ($(boxTop.south east |- boxRestFit.north east)+(0pt,12pt)$);

\node[
    draw,
    dotted,
    thick,
    rounded corners,
    fit=(cyberSW)(cyberNE),
    inner sep=0pt
] (boxCyber) {};

% Labels
\node[anchor=south west]
    at ($(boxTop.north west)+(-9pt,0pt)$)
    {\bf Plant side};

\node[anchor=south west]
    at ($(boxCyber.north west)+(-9pt,0pt)$)
    {\bf Network};

\end{pgfonlayer}

\node[anchor=north, inner sep=3pt] (legend)
    at ($(boxCyber.south)+(0,-2mm)$) {%
    \begin{tabular}{@{}l l l l@{}}
    \tikz{\draw[line] (0,0)--(7mm,0);} & Continuous signal &\qquad
    \tikz{\draw[line, dashed] (0,0)--(7mm,0);} & Discrete signal
    \end{tabular}
};

\end{tikzpicture}
\caption{Configuration of encrypted control system.}
\label{fig:diagram}
\end{figure}
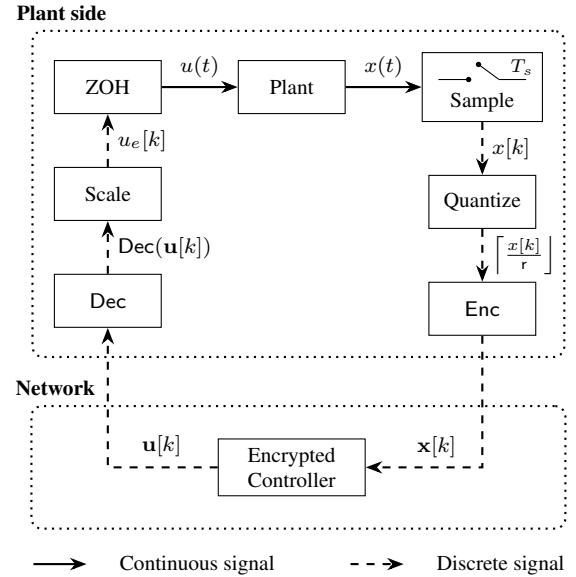

In this paper, we propose a co-design framework for cryptographic parameters and delay-aware feedback gain for encrypted control systems.
To this end, we first characterize an upper bound of the encryption-induced delay as a function of the cryptographic parameters.
Then, for a fixed set of cryptographic parameters and feedback gain, we derive a sufficient condition under which the closed-loop system remains stable for all admissible delays, expressed as a finite set of linear matrix inequalities (LMIs).
This leads to a tractable outer-inner design procedure that searches over a candidate set to identify cryptographic parameters satisfying a desired security level and, for each such parameter set, seeks a stabilizing feedback gain.

\noindent\textbf{Notation:}
Let $\bbZ$, $\bbZ_{\ge 0}$, $\bbN$, and $\bbR$ denote the sets of integers, nonnegative integers, positive integers, and real numbers, respectively.

%-----------------------------------------------------------------------
\section{Preliminaries and Problem Formulation}

\subsection{LWE based cryptosystem}\label{subsec:enc}

This section briefly reviews the LWE based cryptosystem of \cite{Rege09,ChilGama16}, focusing on its multiplicative homomorphic property and computational complexity.

For a prime number $q\in\bbN$, we consider $\bbZ_q:=\bbZ\cap [-q/2,q/2)$ as the space of plaintexts. 
The modulo operation that maps $\bbZ$ onto $\bbZ_q$ is denoted by
\begin{align*}
    a\modp q:=a - \left\lfloor \frac{a+q/2}{q}\right\rfloor q\in\bbZ_q, ~~~~\forall a\in\bbZ,
\end{align*}
which applies entrywise to vectors and matrices. 
The secret key $\sk\in\bbZ_q^N$, where $N\in\bbN$ is a power-of-two, is generated by drawing each component uniformly at random from the ternary set $\{-1,0,1\}$.
Only an entity with access to it can perform both encryption and decryption.

The LWE based scheme supports matrix-vector multiplication to be evaluated over encrypted data.
Specifically, it employs two distinct encryption algorithms, denoted by
$\Enc'$ and $\Enc$, for the multiplier and multiplicand, respectively:
\begin{align*}
    \Enc'&:\bbZ_q \to \bbZ_q^{(N+1) \times d(N+1)}=:\calC', \\
    \Enc&:\bbZ_q \to \bbZ_q^{N+1}=:\calC,
\end{align*}
where $d := \lceil \log_2 q \rceil$.
The corresponding decryption algorithm is denoted by $\Dec:\calC\to\bbZ_q$. 
With a slight abuse of notation, $\Enc'$, $\Enc$, and $\Dec$ are applied entrywise to vectors and matrices.
Then, given $M\in\bbZ_q^{h\times l}$ and $m\in\bbZ_q^l$, there exists a homomorphic multiplication operation\footnote{For simplicity, we omit the dependence of $\otimes$ on $h$ and $l$.} $\otimes:(\calC')^{h \times l}\times \calC^{l} \to \calC^h$, referred to as the ``external product,'' such that \cite{ChilGama16} 
\begin{align}\label{eq:homo}
    \Dec(\Enc'(M)\otimes \Enc(m)) \approx Mm \modp q.
\end{align}

The approximation error in \eqref{eq:homo} is caused by the ``errors'' injected during encryption, which are fundamental to the security of the LWE based scheme.
These errors are typically drawn from a zero-mean discrete Gaussian distribution with standard deviation $\sigma>0$. 
We refer to $\theta=(N,q,\sigma)$ as the cryptographic parameter set and denote the corresponding security level by $\lambda(\theta)$.
In general, the security level increases with $N$ and $\sigma/q$ \cite{Albr21}, and can be estimated using a tool called the LWE estimator \cite{AlbrPlay15}.

For the scalar case, i.e., when $h=l=1$, the computational complexities of $\Enc'$, $\Enc$, $\Dec$, and $\otimes$ are summarized in Table~\ref{tab:complexity} using the big-O notation.
It can be seen that the complexities primarily depend on the parameters $N$ and $q$, indicating an inherent tradeoff between security level and computational complexity.

\begin{table}[t]
\centering
\caption{Asymptotic computational complexity of cryptographic operations in the scalar case ($h=l=1$).}
\label{tab:complexity}
\setlength{\tabcolsep}{6pt}
\renewcommand{\arraystretch}{1.1}
\small
\begin{tabular}{lcccc}
\hline\hline
Operation  & $\Enc'$ & $\Enc$ & $\Dec$ & $\otimes$ \\
\hline
Complexity & $\calO(N^{2}d)$ & $\calO(Nd)$ & $\calO(Nd)$ & $\calO(Nd\log N)$ \\
\hline\hline
\end{tabular}
\end{table}

% \begin{table}[t]
% \centering
% \caption{Asymptotic computational complexity of cryptographic operations for the scalar case (\(h=l=1\)).}
% \label{tab:complexity}
% \setlength{\tabcolsep}{4pt}
% \small
% \begin{tabular}{c c c c c}
% \hline
% Operation & $\Enc'$ & $\Enc$ & $\Dec$ & $\otimes$ \\ \hline
% Complexity & $\calO(N^{2}d)$ & $\calO(Nd)$ & $\calO(Nd)$ & $\calO(Nd\log N)$ \\ \hline
% \end{tabular}
% \end{table}

\subsection{Problem formulation}\label{subsec:prob}

Consider a continuous-time linear plant written by
\begin{equation}
\dot{x}(t)=Ax(t)+Bu(t),~~~~ x(0)=x_{\ini},
\label{eq:plant}
\end{equation}
where $x(t)\in\bbR^n$ is the state with the initial value $x_\ini\in\bbR^n$ and $u(t)\in\bbR^m$ is the control input.
To stabilize \eqref{eq:plant}, suppose that a discrete-time static state feedback controller has been designed as 
\begin{align}\label{eq:controller}
    u[k] = K x[k],
\end{align}
where the state $x[k]:=x(k T_s)$ is obtained with a sampling time of $T_s>0$ for $k\in\bbZ_{\ge 0}$, and $K\in\bbR^{m \times n}$ is the feedback gain. 
Ideally, the signal $u[k]$ is fed-back to the plant \eqref{eq:plant} through a zero-order hold (ZOH), as 
% $u(t)=u[k]$ for $k T_s \le t < (k+1) T_s$.
\begin{align*}
   u(t)=u[k],~~~~ k T_s \le t < (k+1) T_s. 
\end{align*}

Now, we consider a networked control system implementation of \eqref{eq:plant} and \eqref{eq:controller} in which the control input is computed by a remote server residing in the network layer.
In particular, we consider a clock-driven sensor that samples the state periodically, whereas the controller/actuator are event-driven and computes/applies the control input as soon as required data are available.
To protect the feedback gain and sampled state against eavesdropping on the communication channels and the server, we integrate the LWE based scheme introduced in Section~\ref{subsec:enc}. 
The overall configuration is illustrated in Fig.~\ref{fig:diagram}.

The implementation details are as follows.
During the offline phase, the gain $K$ is quantized using a scale factor $1/\sfs\ge 1$, encrypted as
\begin{align}\label{eq:gainEnc}
    \bfK = \Enc'\left( \left\lceil \frac{K}{\sfs} \right\rfloor \modp q \right) \in \left( \calC' \right) ^{m\times n},
\end{align}
and stored at the server.
At each sampling instant $t=k T_s$ of the online phase, the sensor quantizes the sampled state with a resolution $\sfr>0$, encrypts it, and transmits 
\begin{align}\label{eq:stateEnc}
    \bfx[k] := \Enc\left(\left\lceil \frac{x[k]}{\sfr} \right\rfloor \modp q\right) \in \calC^{n}
\end{align}
to the server.
The server then evaluates $\bfu[k] = \bfK \otimes \bfx[k]$
and transmits it back to the actuator, where it is decrypted and rescaled to obtain 
\begin{align}\label{eq:actuator}
    u_e[k] := \sfr\sfs\cdot \Dec(\bfu[k])\approx \sfr\sfs \cdot   \left\lceil \frac{K}{\sfs} \right\rfloor \left\lceil \frac{x[k]}{\sfr} \right\rfloor \modp q,
\end{align}
where the approximation error is due to \eqref{eq:homo}.

Indeed, $u_e[k]$ can be made arbitrarily close to $u[k]$ by choosing $1/\sfr$ and $1/\sfs$ sufficiently large, thereby reducing the precision loss incurred by the rounding operations, and by choosing $q$ sufficiently large to prevent overflow under the modulo operation.
Since the focus of this paper is not on analyzing the effects of quantization or encryption errors, we henceforth make the simplifying assumption that 
\begin{align*}
    u_e[k]\equiv u[k].
\end{align*}
Analyses on the effects of such errors on the closed-loop stability can be found in \cite[Theorem~2]{KimjShim23}
or \cite[Theorem~1]{JangLeej25}.

Recall that the gain $K$ has been designed for the ideal delay-free case, i.e., when $u[k]$ can be applied exactly at $kT_s$.
In the encrypted implementation, however, non-negligible delays arise from the transmission of ciphertexts whose message sizes are substantially larger than that of plaintexts, as well as from the time required for encryption, decryption, and homomorphic multiplication.

Formally, let $\tau[k]\in\bbR$ denote the total time elapsed from the sampling of $x[k]$ to the computation of $u_e[k]$ at the actuator, which we refer to as the \textit{total delay} at sampling instant $k$.
For analytical simplicity, we impose the following assumption.

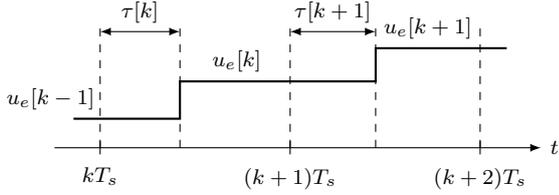
\begin{figure}[t]
\centering
\begin{tikzpicture}[x=1cm,y=1.1cm,>=latex,font=\small]

% spacing between sampling instants
\def\Ts{2.5}
\def\tauk{0.42}
\def\tauknext{0.45}

% actual delayed instants
\pgfmathsetmacro{\tkavail}{\tauk*\Ts}
\pgfmathsetmacro{\tknextavail}{\Ts+\tauknext*\Ts}

% time axis
\draw[->] (-0.6,0) -- ({2*\Ts+0.8},0) node[right] {$t$};

% sampling instants
\foreach \x/\lab in {
  0/{$kT_s$},
  \Ts/{$(k+1)T_s$},
  {2*\Ts}/{$(k+2)T_s$}
}{
  \draw[dashed] (\x,0) -- (\x,1.45);
  \draw (\x,0.08) -- (\x,-0.08);
  \node[below=4pt] at (\x,0) {\lab};
}

% delayed availability instants
\draw[dashed] (\tkavail,0) -- (\tkavail,1.45);
\draw[dashed] (\tknextavail,0) -- (\tknextavail,1.45);

% latency arrows
\draw[<->] (0,1.40) -- node[above] {$\tau_\theta$} (\tkavail,1.40);
\draw[<->] (\Ts,1.40) -- node[above] {$\tau_\theta$} (\tknextavail,1.40);

% sampled control held by ZOH
\draw[thick]
  (-0.35,0.35) -- (\tkavail,0.35)
  -- (\tkavail,0.80) -- (\tknextavail,0.80)
  -- (\tknextavail,1.20) -- ({2*\Ts+0.35},1.20);

% labels for active inputs
\node[above] at (-0.65,0.35) {$u_e[k-1]$};
\node[above] at (1.8,0.80) {$u_e[k]$};
\node[above] at ({0.5*(\tknextavail+2*\Ts)},1.20) {$u_e[k+1]$};

% availability markers
% \draw[->] (\tkavail,1.33) -- (\tkavail,0.84);
% \draw[->] (\tknextavail,1.33) -- (\tknextavail,1.24);

\end{tikzpicture}
\caption{Timing diagram of the zero-order held input $u(t)$ under the total delay $\tau[k]=\tau_\theta\in[0,T_s)$.}
\label{fig:delay}
\end{figure}

\begin{assumption}\label{asm:delay}
The total delay is constant across sampling instants, i.e., $\tau[k] = \tau_{\theta}$ for all $k\in\bbZ_{\ge 0}$, where $\tau_\theta>0$ is a constant determined by the cryptographic parameter set $\theta$ and is not assumed to be known a priori.
\end{assumption}

When $\tau_\theta \in[0,T_s)$, the actuator holds the previous input until the current one becomes available, as illustrated in Fig.~\ref{fig:delay}.
Specifically, the control input is applied as
\begin{align}\label{eq:delayInput}
    u(t) = 
    \begin{cases}
        u_e[k-1], & k T_s\le t <k T_s + \tau_\theta, \\
        u_e[k], & kT_s+\tau_\theta \le t <(k+1) T_s.
    \end{cases}
\end{align}

It is well known that such an input delay can degrade performance and even destabilize the closed-loop system \cite{ZhanBran01,Cloo08}.
Motivated by this issue, we propose a co-design framework for the cryptographic parameter set $\theta $ and the feedback gain $K$.
Specifically, our goal is to jointly determine $(\theta,K)$ such that the following requirements are satisfied:
\begin{enumerate}[label=\textbf{(R\arabic*)},ref=R\arabic*,leftmargin=*,align=left]
    \item \label{req:security} \textit{Security requirement}: For a desired security level $\lambda^*$, the condition $\lambda(\theta)\ge \lambda^*$ holds.
    \item \label{req:delay} \textit{Delay requirement}: The total delay satisfies $\tau_\theta\in[0,T_s)$.
    \item \label{req:stability} \textit{Stability requirement}: The closed-loop system \eqref{eq:plant} with the delayed input \eqref{eq:delayInput} is stable.
\end{enumerate}

As a first step toward developing this co-design framework, we focus on static state-feedback controllers of the form \eqref{eq:controller}.
We also note that controller design methods under time-varying delays or delays exceeding the sampling time have been studied in the literature \cite{Cloo08,LiukSeli19}. 
Extending the proposed framework to incorporate such methods is left for future work.

%-----------------------------------------------------------------------
\section{Main Results}

\subsection{Quantification of the total delay}

We begin by quantifying the total delay $\tau_\theta$ as a function of the cryptographic parameter set $\theta$.
Under Assumption~\ref{asm:delay}, we decompose the total delay as
\begin{equation}
    \tau_{\theta}
    =
    \tau_{\theta}^{\Comm}
    +\tau_{\theta}^{\Enc}
    +\tau_{\theta}^{\Dec} 
    +\tau_{\theta}^{\otimes},
    \label{eq:tau_decomp}
\end{equation}
where the terms on the right-hand-side, in order, represent the delays due to communication, sensor-side encryption, actuator-side decryption, and server-side homomorphic multiplication.
The time required for the encryption and transmission of $\bfK$ are excluded from \eqref{eq:tau_decomp}, since they are performed during the offline phase.
Note that other types of delays, such as queuing or buffering delays, can be incorporated into $\tau_\theta$ without loss of generality.

We adopt the following standard communication model \cite{LiukSeli19,Kuro13} to obtain an explicit bound on the communication delay.

\begin{assumption}\label{asm:comm}
The communication link between the plant and the server has a fixed transmission rate $R>0$ and a fixed propagation delay $\delta\ge 0$. 
\end{assumption}

Here, $R$ denotes the number of bits transmitted per unit time, measured in bits per second, and $\delta$ denotes the propagation delay measured in seconds, which depends primarily on the distance between the plant and the server.
Accordingly, transmitting a message of length $L$ bits requires at most $L/R + \delta$ seconds.
Based on this model, the following lemma characterizes $\tau_\theta^{\Comm}$.

\begin{lemma}\label{lem:comm}
Under Assumption~\ref{asm:comm}, the communication delay is upper bounded as
\begin{equation}\label{eq:propCommToShow}
    \tau_\theta^{\Comm} \le \frac{(n+m)(N+1)d}{R}+2\delta.
\end{equation}
\end{lemma}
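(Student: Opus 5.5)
The bound follows by splitting the communication delay into the sensor-to-server (uplink) and server-to-actuator (downlink) transmissions, bounding the bit length of each message, and applying Assumption~\ref{asm:comm} to each.

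\textbf{Uplink.} At each sampling instant the sensor sends $\bfx[k]\in\calC^n$ from \eqref{eq:stateEnc}. Since $\calC=\bbZ_q^{N+1}$, this is $n(N+1)$ elements of $\bbZ_q$. Each element of $\bbZ_q=\bbZ\cap[-q/2,q/2)$ takes one of $q$ values, so it can be encoded with $d=\lceil\log_2 q\rceil$ bits, for instance via the bijection $a\mapsto a+\lceil q/2\rceil$ onto $\{0,\dots,q-1\}$. The uplink message therefore has length $L_1=n(N+1)d$ bits. By Assumption~\ref{asm:comm}, its transmission takes at most $n(N+1)d/R+\delta$ seconds.

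\textbf{Downlink.} The server returns $\bfu[k]=\bfK\otimes\bfx[k]\in\calC^m$, since $\otimes$ maps $(\calC')^{m\times n}\times\calC^n\to\calC^m$. Its length is $L_2=m(N+1)d$ bits, so its transmission takes at most $m(N+1)d/R+\delta$ seconds.

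\textbf{Conclusion.} The encrypted gain $\bfK$ is stored at the server offline, so no other online transmissions occur. By the decomposition \eqref{eq:tau_decomp}, the communication delay $\tau_\theta^{\Comm}$ is the sum of these two transmission times:
\begin{align*}
\tau_\theta^{\Comm}\le \frac{n(N+1)d}{R}+\delta+\frac{m(N+1)d}{R}+\delta=\frac{(n+m)(N+1)d}{R}+2\delta .
\end{align*}
This is \eqref{eq:propCommToShow}.

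The only step that needs care is justifying the bit count: each entry is represented with exactly $d$ bits, and each vector message is sent as a single message of length $L$, so the propagation delay $\delta$ is incurred once per link rather than once per entry. If the $n$ (or $m$) ciphertexts were instead transmitted as separate packets, I would assume they are pipelined over the link. Then the total time is still at most $L/R+\delta$, because propagation of successive packets overlaps.
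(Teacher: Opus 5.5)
Your proof is correct and follows the paper's argument: bound the uplink message $\bfx[k]\in\calC^n$ by $n(N+1)d$ bits and the downlink message $\bfu[k]\in\calC^m$ by $m(N+1)d$ bits, apply the $L/R+\delta$ model to each, and sum. One trivial slip in your illustrative encoding: for odd prime $q$, the shift $a\mapsto a+\lceil q/2\rceil$ maps $\bbZ_q$ onto $\{1,\dots,q\}$, not $\{0,\dots,q-1\}$ (use $a\mapsto a+\lfloor q/2\rfloor$ instead); this is harmless, since $q<2^d$ whenever $q$ is not a power of two.
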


\begin{proof}
The communication delay is caused by the transmission of $\bfx[k]\in\calC^n$ and $\bfu[k]\in\calC^m$.
Since each ciphertext in $\calC=\bbZ_q^{N+1}$ contains $N+1$ entries in $\bbZ_q$, and each entry requires at most $d=\lceil\log_2 q\rceil$ bits, transmitting $\bfx[k]$ and $\bfu[k]$ requires at most
\begin{align*}
    \frac{n(N+1)d}{R}+\delta
    \qquad\text{and}\qquad
    \frac{m(N+1)d}{R}+\delta
\end{align*}
seconds, respectively.
Summing these two terms yields \eqref{eq:propCommToShow}, and this concludes the proof.
\end{proof}

We next quantify the remaining components of $\tau_\theta$, namely, the delays due to encryption, decryption, and homomorphic multiplication.
The following lemma provides an explicit upper bound.

\begin{lemma}\label{lem:comp_delay}
There exist positive constants $c_{\Enc}$, $c_{\Dec}$, and $c_{\otimes}$ such that
\begin{multline}\label{eq:comp_delay_bound}
    \tau_{\theta}^{\Enc}
    +\tau_{\theta}^{\Dec}
    +\tau_{\theta}^{\otimes}
    \\ \le
    Nd \left( c_{\Enc}n
    +c_{\Dec}m
    +c_{\otimes}mn\,\log N \right).
\end{multline}
\end{lemma}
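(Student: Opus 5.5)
The plan is to bound each of the three computational delays separately, using the scalar complexities in Table~\ref{tab:complexity}, and then add the bounds. The underlying model is that execution time is proportional to the number of elementary arithmetic operations on $\bbZ_q$. So an operation of complexity $\calO(f(N,d))$ has running time at most $c\, f(N,d)$ for some constant $c>0$ that depends only on the hardware of the executing entity, and not on $\theta$. This constant absorbs the hidden constant of the big-O notation together with the time per elementary operation. I will state this explicitly as the modelling step. It is the one point where the statement relies on an interpretation rather than a derivation. The main obstacle is that big-O only bounds the cost for sufficiently large $N$ and $d$. I will handle this by taking the maximum of the finitely many ratios over small parameter values, or equivalently by noting that for $N\ge 2$ and $d\ge 1$ each cost is bounded by a constant times the stated expression. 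For $N=1$ we have $\log N=0$, so I need $N\ge 2$, which is harmless since $N$ is a power of two used for security.

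\emph{Encryption.} The sensor computes $\bfx[k]$ in \eqref{eq:stateEnc}, which consists of $n$ scalar encryptions. Quantization and the modulo reduction add only $\calO(n)$ work, which is absorbed because $Nd\ge 1$. Hence $\tau_\theta^{\Enc}\le c_{\Enc}\, nNd$.

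\emph{Decryption.} The actuator decrypts the $m$ ciphertexts of $\bfu[k]$ and rescales the results as in \eqref{eq:actuator}. This gives $\tau_\theta^{\Dec}\le c_{\Dec}\, mNd$.

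\emph{Homomorphic multiplication.} For the server, I use the structure of the external product $\otimes:(\calC')^{m\times n}\times\calC^n\to\calC^m$. Each output entry $\bfu_i[k]$ is formed from the $n$ scalar external products $\bfK_{ij}\otimes \bfx_j[k]$, and these are then homomorphically added. Adding ciphertexts in $\calC$ costs $\calO(N)$ each, so the additions contribute $\calO(mnN)$ in total. This is absorbed into the $mn$ scalar products, each of cost $\calO(Nd\log N)$, which yields $\tau_\theta^{\otimes}\le c_\otimes\, mnNd\log N$.

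Summing the three bounds and factoring out $Nd$ gives \eqref{eq:comp_delay_bound}.
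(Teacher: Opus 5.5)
Your proposal is correct and follows essentially the same route as the paper: you bound one scalar encryption, decryption, and external product by $c_{\Enc}Nd$, $c_{\Dec}Nd$, and $c_{\otimes}Nd\log N$ using Table~\ref{tab:complexity}, multiply by the $n$, $m$, and $mn$ scalar operations per sampling instant, and sum. Your extra care (the explicit modelling step, requiring $N\ge 2$ so that $\log N>0$, and absorbing quantization and ciphertext additions into the dominant terms) only makes explicit what the paper passes over silently. The big-O threshold issue, however, is settled by reading the table as a uniform bound over all admissible $(N,d)$, as the paper implicitly does, and not by a maximum over finitely many small cases, since the small-$N$ region still contains infinitely many values of $d$.
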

\begin{proof}
By Table~\ref{tab:complexity}, there exist positive constants $c_{\Enc}$, $c_{\Dec}$, and $c_{\otimes}$, independent of $\theta$, such that the execution times of one scalar encryption, decryption, and homomorphic multiplication are upper bounded by $c_{\Enc}Nd$, $c_{\Dec}Nd$, and $c_{\otimes}dN\log N$, respectively.
Since $x[k]\in\bbR^n$ and $\bfu[k]\in\calC^m$ are encrypted and decrypted entrywise, respectively, we have
\begin{equation*}
    \tau_{\theta}^{\Enc}\le c_{\Enc}nNd,
    \qquad
    \tau_{\theta}^{\Dec}\le c_{\Dec}mNd.
\end{equation*}
Moreover, computing $\bfK\otimes\bfx[k]$ requires $mn$ scalar homomorphic multiplications, and hence
\begin{equation*}
    \tau_{\theta}^{\otimes}\le c_{\otimes}mn\,Nd\log N.
\end{equation*}
Summing the above inequalities yields \eqref{eq:comp_delay_bound}, and this concludes the proof.
\end{proof}

The constants $c_{\Enc}$, $c_{\Dec}$, and $c_{\otimes}$ are implementation dependent, and vary with hardware specification, choice of HE library, and specific algorithms used to realize each cryptographic operation.
A practical strategy is to execute each operation repeatedly and use the worst-case as an empirical estimate of those constants.

Although the total delay $\tau_\theta$ is unknown under Assumption~\ref{asm:delay}, Lemmas~\ref{lem:comm} and \ref{lem:comp_delay} provide the following upper bound as a function of the cryptographic parameter $\theta$:
\begin{multline}\label{eq:delayIneq}
    \tau_\theta \le \frac{(n+m)(N+1)d}{R}+2\delta \\ + Nd \big( c_{\Enc}n
    +c_{\Dec}m
    +c_{\otimes}mn\,\log N \big) =:\bar{\tau}_\theta.
\end{multline}

\subsection{Delay-aware gain design}

In what follows, for a fixed cryptographic parameter $\theta$ and feedback gain $K$, we derive a sampled-data representation of the closed-loop system and analyze its stability.

By exactly discretizing the closed-loop system \eqref{eq:plant} under the delayed input \eqref{eq:delayInput} at the sampling instants and using the relation $u_e[k]=u[k]=Kx[k]$, we obtain
\begin{align}\label{eq:loop}
    x[k+1] & = \Phi x[k]  \!+\! \Gamma_0(\tau_\theta) u_e[k] \!+\! \Gamma_{1}(\tau_\theta) u_e[k-1] \\ 
    &= \left(\Phi + \Gamma_0(\tau_\theta)  K \right) x[k] \!+\! \Gamma_{1}(\tau_\theta) K x[k-1], \nonumber
\end{align}
where $\Phi = \exp(AT_s)$ and 
\begin{align*}
     \Gamma_0(\tau_\theta) &:= 
    \int_{\tau_\theta}^{T_s} \exp(A(T_s-h))\,dh\, B, \\
    \Gamma_1(\tau_\theta) &:= 
    \int_{0}^{\tau_\theta} \exp(A(T_s-h))\,dh\,B.
\end{align*}
By defining the augmented state $\xi[k]:=[x[k]^\top,x[k-1]^\top]^\top \in \bbR^{2n}$, we further obtain
\begin{equation}
    \xi[k+1]
    =
    \bar A(\tau_\theta,K)\,\xi[k],
    \label{eq:augmented_system_static}
\end{equation}
where
\begin{equation}
    \bar A(\tau_\theta,K)
    :=
    \begin{bmatrix}
        \Phi+\Gamma_0(\tau_\theta)K & \Gamma_1(\tau_\theta)K\\
        I_n & 0_{n\times n}
    \end{bmatrix}.
    \label{eq:Abar_static}
\end{equation}

\begin{algorithm}[t]
\caption{Co-design algorithm}
\label{alg:outer_inner}
\begin{algorithmic}[1]
\renewcommand{\algorithmicrequire}{\textbf{Input:}}
\Require Model parameters $(A,B)$, desired security level $\lambda^*$, and sampling time $T_s$
\renewcommand{\algorithmicrequire}{\textbf{Output:}}
\Require A pair $(\theta,K)$ satisfying (R1)--(R3)
\State Construct a candidate set $\Theta$ of cryptographic parameters such that $\lambda(\theta)\ge \lambda^*$ for all $\theta \in \Theta$
\For{$\theta\in\Theta$}
    \State Compute $\bar{\tau}_\theta$ in \eqref{eq:delayIneq}
    \If{$\bar{\tau}_\theta\ge T_s$}
        \State \textbf{Continue}
    \Else
        \State Construct a finite candidate set of gains $\calK$
    \For{$K\in\calK$}
        \If{\eqref{eq:LMI} is feasible for some $P\succ 0$}
            \State \textbf{Return} $(\theta,K)$
        \EndIf
    \EndFor
    \EndIf
\EndFor
\end{algorithmic}
\end{algorithm}

Consequently, the stability requirement (R3) reduces to verifying the Schur stability of all matrices in the set
\begin{align*}
    \Psi(\theta,K) := \{\bar A(\tau_\theta,K) \mid \tau_\theta\in[0,\bar\tau_\theta]\}.
\end{align*}
Since this requires checking infinitely many matrices, we derive a tractable sufficient condition in the sequel, expressed in terms of a finite set of LMIs.

By the Jordan form representation of $A$, the matrix $\Gamma_1(\tau_\theta)$ can be expressed as \cite[Appendix~B.2]{Cloo08}
\begin{equation}
    \Gamma_1(\tau_\theta)=\textstyle\sum_{i=1}^{\nu}\alpha_i(\tau_\theta)G_i,
    \qquad \forall\tau_\theta\in[0,\bar\tau_\theta]
    \label{eq:Gamma1_decomp}
\end{equation}
for some $\nu\in\bbN$,
where \(\alpha_i:[0,\bar\tau_\theta]\to\mathbb{R}\) are continuous scalar functions and
\(G_i\in\bbR^{n\times m}\) are constant matrices for $i=1,\ldots, \nu$.
Then, with $\Gamma:=\Gamma_0(\tau_\theta)+\Gamma_1(\tau_\theta)
    =\int_{0}^{T_s} e^{A(T_s-h)}dhB$,
which is independent of $\tau_\theta$, we can decompose $\bar{A}(\tau_\theta,K)$ as
\begin{equation*}
    \bar A(\tau_\theta,K)
    =
    \underbrace{
    \begin{bmatrix}
        \Phi+\Gamma K & 0_{n \times n} \\
        I_n & 0_{n \times n}
    \end{bmatrix}}_{=:M_0(K)}
    +\textstyle\sum_{i=1}^{\nu}\alpha_i(\tau_\theta)M_i(K),
\end{equation*}
where
\begin{equation}
    M_i(K):=
    \begin{bmatrix}
        -G_iK & G_iK\\
        0_{n \times n} & 0_{n \times n}
    \end{bmatrix},
    \qquad i=1,\dots,\nu.
    \label{eq:Mi_def}
\end{equation}

For each \(i=1,\dots,\nu\), let
\begin{equation}
    \alpha_i^-:=\min_{\tau_\theta\in[0,\bar\tau_\theta]}\alpha_i(\tau_\theta),
    ~~~~ \alpha_i^+:=\max_{\tau_\theta\in[0,\bar\tau_\theta]}\alpha_i(\tau_\theta).
    \label{eq:alpha_bounds}
\end{equation}
In fact, the functions $\alpha_i(\cdot)$ can be computed explicitly \cite[Appendix~B.2]{Cloo08}, and thus, \eqref{eq:alpha_bounds} can be obtained by solving a one-dimensional optimization problem.
Then, for each \(\eta=(\eta_1,\dots,\eta_\nu)\in\{-,+\}^\nu\), we define
\begin{equation}
    M_\eta(\theta,K):=
    M_0(K)+\textstyle\sum_{i=1}^{\nu}\alpha_i^{\eta_i} M_i(K).
    \label{eq:Asigma_def}
\end{equation}

\begin{proposition}\label{prop:stab}
Suppose that a cryptographic parameter set $\theta$ and a feedback gain $K\in\bbR^{m \times n}$ are given.
Under Assumptions~\ref{asm:delay} and~\ref{asm:comm},
if there exists a symmetric matrix \(P\succ0\) such that
\begin{equation}\label{eq:LMI}
    M_\eta(\theta,K)^\top PM_\eta(\theta,K)-P\prec0,
    ~~~~ \forall \eta\in\{-,+\}^{\nu},
\end{equation}
then every matrix in $\Psi(\theta,K)$ is Schur stable. 
\end{proposition}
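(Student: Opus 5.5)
The argument is a convexity (polytopic) argument combined with a common quadratic Lyapunov function. Fix an arbitrary $\tau_\theta\in[0,\bar\tau_\theta]$ and set $a_i:=\alpha_i(\tau_\theta)$. By the definition in \eqref{eq:alpha_bounds}, we have $a_i\in[\alpha_i^-,\alpha_i^+]$ for each $i$. The first step is to show that
\[
\bar A(\tau_\theta,K)=M_0(K)+\textstyle\sum_{i=1}^{\nu}a_iM_i(K)
\]
lies in the convex hull of the $2^\nu$ vertex matrices $M_\eta(\theta,K)$, $\eta\in\{-,+\}^\nu$.

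To do this, write each $a_i=\lambda_i\alpha_i^-+(1-\lambda_i)\alpha_i^+$ with $\lambda_i\in[0,1]$; if $\alpha_i^-=\alpha_i^+$, take $\lambda_i=1$. Then define the product weights
\[
\mu_\eta:=\prod_{i=1}^{\nu}\lambda_i^{[\eta_i=-]}(1-\lambda_i)^{[\eta_i=+]}.
\]
These satisfy $\mu_\eta\ge0$ and $\sum_\eta\mu_\eta=\prod_i(\lambda_i+1-\lambda_i)=1$. Since the map $(a_1,\dots,a_\nu)\mapsto M_0+\sum_i a_iM_i$ is affine and the weights factor over coordinates, we get $\sum_\eta\mu_\eta M_\eta(\theta,K)=\bar A(\tau_\theta,K)$. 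The verification is a short induction on $\nu$, or equivalently the observation that the marginal of $\mu$ on coordinate $i$ gives exactly the weights $\lambda_i,1-\lambda_i$. This is the step that needs the most care, but it is a routine multilinear interpolation.

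The second step uses convexity of the matrix function $X\mapsto X^\top PX$ for $P\succ0$. By \eqref{eq:LMI} and compactness of the finite vertex set, there is $\epsilon>0$ such that $M_\eta^\top PM_\eta\preceq P-\epsilon I$ for every $\eta$. By Jensen's inequality for this matrix-convex function (equivalently, a Schur complement argument: $\left[\begin{smallmatrix}P-\epsilon I & X^\top\\ X & P^{-1}\end{smallmatrix}\right]\succeq0$ is affine in $X$ and holds at each vertex, so it holds at any convex combination), we obtain
\[
\bar A^\top P\bar A\preceq\sum_\eta\mu_\eta M_\eta^\top PM_\eta\preceq P-\epsilon I\prec P .
\]

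The final step concludes Schur stability. Let $\bar A v=\lambda v$ with $v\neq0$. Then
\[
|\lambda|^2v^*Pv=v^*\bar A^\top P\bar Av<v^*Pv,
\]
and since $v^*Pv>0$ this gives $|\lambda|<1$. Because $\tau_\theta$ was arbitrary in $[0,\bar\tau_\theta]$, every matrix in $\Psi(\theta,K)$ is Schur stable. Assumptions~\ref{asm:delay} and~\ref{asm:comm} enter only through the fact, from \eqref{eq:delayIneq}, that the true delay lies in $[0,\bar\tau_\theta]$, so that \eqref{eq:augmented_system_static} is described by an element of $\Psi(\theta,K)$. This also requires \eqref{eq:Gamma1_decomp} to be valid on that interval.
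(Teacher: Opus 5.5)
Your proposal is correct and follows the same route as the paper. Both arguments use $\alpha_i(\tau_\theta)\in[\alpha_i^-,\alpha_i^+]$ to place $\Psi(\theta,K)$ inside the convex hull of the vertex matrices $M_\eta(\theta,K)$, then apply a common quadratic Lyapunov function over that polytope; the paper defers the second step to the cited standard result, while you write out the product weights, the matrix-convexity (Schur complement) bound, and the eigenvalue argument explicitly.
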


\begin{proof}
By definition, \(\alpha_i(\tau_\theta)\in[\alpha_i^-,\alpha_i^+]\) for all
\(i=1,\ldots,\nu\). 
Hence, every matrix in \(\Psi(\theta,K)\) belongs to
the convex hull of the matrices \(M_\eta(\theta,K)\), i.e.,
\begin{equation}
    \Psi(\theta,K)\subseteq
    {\mathrm{co}}\{M_\eta(\theta,K)\mid \eta\in\{-,+\}^\nu\}.
    \label{eq:Psi_overapprox}
\end{equation}
The conclusion then follows from the
standard common quadratic Lyapunov argument for convex polytopes; see the proof
of \cite[Theorem~4.3.1]{Cloo08}. 
\end{proof}

The condition in Proposition~\ref{prop:stab} can be tightened whenever a tighter range of $\tau_\theta$ is given.
For example, if a lower bound $\underline{\tau}_\theta>0$ is known, recomputing \eqref{eq:alpha_bounds} over $[\underline{\tau}_\theta, \bar{\tau}_\theta]$ reduces the conservativeness of \eqref{eq:LMI}.

\subsection{Co-design of cryptographic parameters and gain}

The proposed co-design method for the cryptographic parameter set $\theta$ and the
feedback gain $K$ is formulated as an outer--inner procedure, summarized in
Algorithm~\ref{alg:outer_inner}. 

In the outer loop, the cryptographic parameter set $\theta$ is selected from a discrete candidate set, so as to satisfy the security and the delay requirements (R1) and (R2).
The candidate set is discrete because the secret key length $N$ and the modulus $q$ are restricted to powers-of-two and prime numbers, respectively; in practice, such candidates are often precomputed in cryptographic libraries or can be chosen according to standard guidelines \cite{Albr21}.

For each fixed candidate $\theta$, the inner loop checks the feasibility of the LMIs in \eqref{eq:LMI} for each feedback gain in a finite candidate set $\calK$. 
The set $\calK$ is introduced for computational tractability and can, for example, be chosen as the set of gains designed for the delay-free case.
Consequently, any pair $(\theta,K)$ returned by the algorithm satisfies the requirements (R1)--(R3). In the present work, we adopt this feasibility based design for simplicity. 
If an additional performance criterion is imposed, the same framework can be extended to select a more desirable gain among the feasible candidates.

Regarding practical applicability, the LMIs in \eqref{eq:LMI} can be efficiently checked using standard semidefinite programming solvers given $K$. 
In principle, one may also attempt to optimize $K$ and $P$ simultaneously. 
However, the resulting matrix inequalities become nonconvex, and thus, developing suitable relaxations or alternative reformulations that enable a tractable simultaneous search over $K$ and $P$ is left for future work.

\section{Simulation Results}

This section provides simulation results to demonstrate the effectiveness of the proposed method through a numerical example. 
Consider the plant \eqref{eq:plant} given as
\begin{align*}
    A = 
    \begin{bmatrix}
        0 & 1 \\
        0 & 0 
    \end{bmatrix}, ~~~~
    B =
    \begin{bmatrix}
        0 \\
        1
    \end{bmatrix}
\end{align*}
with the initial condition $x_\ini = [1, 0]^\top$ and sampling time $T_s=\SI{25}{\milli\second}$.
We empirically set the communication transmission rate to $R=\SI{250}{\mega\bit/\second}$ and the propagation delay to $\delta=\SI{2}{\milli\second}$.

The encrypted controller was implemented using the Lattigo library \cite{Lattigo} with the Brakerski-Gentry-Vaikuntanathan (BGV) variant \cite{BrakGent14} of the LWE based scheme.
The target security level was set to $\lambda^*=128$ bits \cite{Albr21}, and three candidate cryptographic parameter sets satisfying the security requirement were selected, which are specified in Table~\ref{tab:bench}.
For each candidate parameter set, the upper bound on the computation delay was estimated as the worst-case value among $100$ runtime measurements.
Then, $\bar{\tau}_\theta$ was obtained by adding this worst-case computation delay to the upper bound on the communication delay, as in \eqref{eq:delayIneq}.

Among the three candidates, it can be found from Table~\ref{tab:bench} that only $\theta_1$ satisfies~(R2). 
Hence, for $\theta_1$, we designed two candidate feedback gains $K_1=[-960, -50]$ and $K_2=[-72, -16.1]$
via pole placement for the delay-free case. 
Precisely, $K_1$ and $K_2$ were chosen so that the matrices $\bar{A}(0,K_1)$ and $\bar{A}(0,K_2)$ have poles at $(0.20,0.25)$ and $(0.775,0.800)$, respectively, thus stabilizing the delay-free closed-loop system. 
We then tested the LMIs in \eqref{eq:LMI} using \texttt{cvxpy}. While no feasible solution was found for $K_1$, the LMIs were feasible for $K_2$ with one feasible solution given by
\begin{equation*}
P =
\begin{bmatrix}
 0.740 &  0.073 & -0.101 & -0.019\\
 0.073 &  0.052 & -0.004 & -0.032\\
-0.101 & -0.004 &  0.176 &  0.013\\
-0.019 & -0.032 &  0.013 &  0.032
\end{bmatrix}.
\end{equation*}

\begin{table}[t]
\centering
\caption{Candidate parameter sets satisfying (R1) and corresponding empirical upper bounds on total delay.}
\label{tab:bench}
\begin{tabular}{ccccc}
\hline
 & $\log_2 N$ & $\log_2q$ & $\sigma$ & $\bar{\tau}_\theta$ [\SI{}{\milli\second}] \\
\hline
\hline
$\theta_1$ & 12 & 109 & 3.2 & 19.150   \\
$\theta_2$ & 13 & 218 & 3.2 & 80.810 \\
$\theta_3$ & 14 & 438 & 3.2 & 304.225 \\
\hline
\end{tabular}
\end{table}

We evaluated the closed-loop performance using the gains $K_1$ and $K_2$ for four different values of total delay $\tau_\theta$; Fig.~\ref{fig:result} shows the corresponding state norm $\|x(t)\|$. 
The results show that $K_2$ stabilizes the closed-loop system for all considered total delays, whereas the performance of the delay-unaware gain $K_1$ deteriorates as $\tau_\theta$ increases. 
In particular, the closed-loop system becomes unstable when $\tau_\theta = 3\bar{\tau}_\theta/4$. 
These results demonstrate both the effectiveness and necessity of the proposed co-design framework.

\begin{figure}[t]
\centering
\input{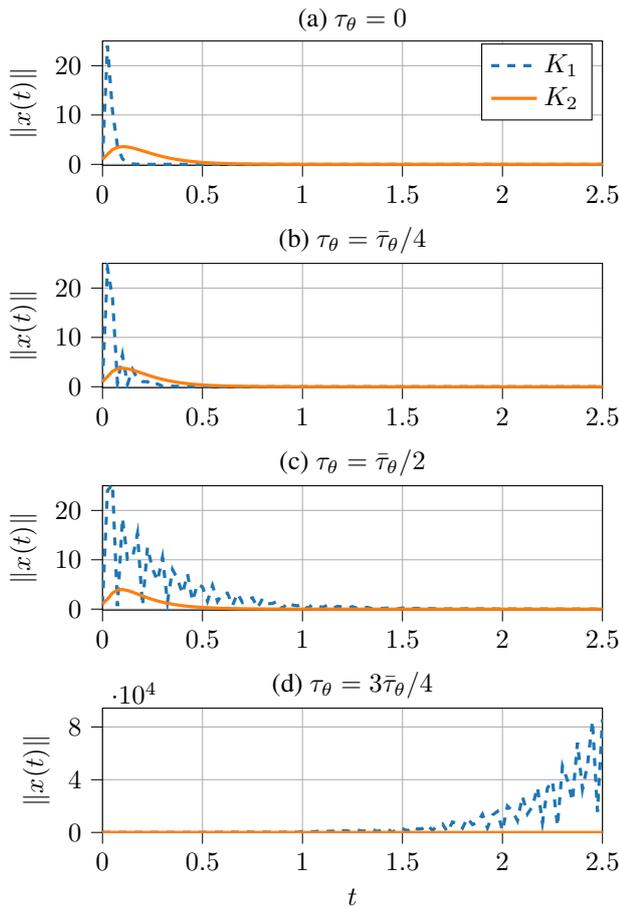}
\vspace{-0.5cm}
\caption{Closed-loop state norm $\|x(t)\|$ using the gains $K_1$ and $K_2$ for four values of the total delay $\tau_\theta$.}
\label{fig:result}
\end{figure}

%-----------------------------------------------------------------------

\section*{Acknowledgment}
This work was supported by the National Research Foundation of Korea (NRF) grant funded by the Korea government (MSIT) (No. RS-2024-00353032 and RS-2026-25504174).

%%%%%%%%%%%%%%%% BIBLIOGRAPHY IN THE LaTeX file !!!!! %%%%%%%%%%%%%%%%%%%%%%

\end{document}